\documentclass[a4paper, 12pt]{amsart}

\usepackage[a4paper,left=1in,right=1in,top=1.25in,bottom=1.25in]{geometry}
\usepackage{amssymb}
\usepackage{graphicx}
\usepackage[mathlines]{lineno}

\newtheorem{theorem}{Theorem}[section]
\newtheorem{corollary}[theorem]{Corollary}
\newtheorem{lemma}[theorem]{Lemma}
\theoremstyle{definition}
\theoremstyle{remark}


\begin{document}
\title[]{What causes the increase in aggregation as a parasite moves up a food chain?}
\thanks{RM is supported by the Australian Research Council (Centre of Excellence for Mathematical and Statistical Frontiers, CE140100049). \\ 
We thank Dr Derek Zelmer, University of South Carolina, for early discussions. \\ Author contribution statement: RL initiated the study, collected the data and drafted the manuscript. RM  developed and analysed the model, and helped draft the manuscript. Both authors gave final approval for publication.}
\maketitle
\noindent R.J.G. LESTER and R. McVINISH\\ 
School of Biological Sciences, University of Queensland \\
School of Mathematics and Physics, University of Queensland \\

\noindent ABSTRACT. General laws in ecological parasitology are scarce.  Here we evaluate data published by over 100 authors to determine whether the number of hosts in a life cycle is associated with the degree of aggregation of fish parasites at different stages.  Parasite species were grouped taxonomically to produce 20 or more data points per group as far as possible.  Most parasites that remained at one trophic level were less aggregated than those that had passed up a food chain. We use a stochastic model to show that high parasite overdispersion in predators can be solely the result of the accumulation of parasites in their prey.  The model is further developed to show that a change in the predators feeding behaviour with age may further increase parasite aggregation. \\

\noindent Keywords: fish, index of dispersion, Monogenea, Nematoda, Taylor's power law, Acanthocephala.

\section{Introduction}

One of the few generalisations in ecological parasitology is that the frequency distributions of parasites are usually overdispersed, that is, a parasite population tends to be aggregated within certain host individuals \cite{Crofton:71}.  Most free-living organisms are aggregated in the environment but parasites are an extreme case, almost always highly aggregated in their host populations.  Understanding the processes that produce this heterogeneity in the distribution of macroparasites in their host populations continues to be a central research area in ecological parasitology \cite{WBDMPRRS:02}.  Theoretical studies have shown that demographic stochasticity produces aggregated distributions of species \cite{AGCH:82,Galvani:03}.  Experimental studies demonstrate that a range of factors influence the level of aggregation in infected hosts including spatial aggregation in infective stages \cite{KA:79},  host behaviour \cite{JH:14}, and host body condition and food availability \cite{TDS:13,LGH:14}.  Empirical studies have linked other factors, such as burrow structure in rabbit fleas \cite{KSKMSHM:06}, and season in fish strigeids \cite{JK:88}.  Shaw et al. \cite{SGD:98} summarised the proposed biological explanations as (i) a series of random infections with different densities of infectious stages, (ii) host individuals vary in susceptibility to infection, and (iii) non-random distribution of infective stages in the habitat.  Though Poulin \cite{Poulin:07} considered the main factors were heterogeneity in exposure and heterogeneity in susceptibility, Dobson and Merenlender \cite{DM:91} found that there was a tendency for lower levels of aggregation to be observed in intermediate hosts compared to definitive hosts, and Shaw and Dobson \cite{SD:95} in an analysis of several hundred host/parasite systems noted that those infections where definitive hosts were infected by consuming invertebrate intermediate hosts were associated with relatively high degrees of aggregation.   Lester \cite{Lester:12} provided evidence from his data on marine fish parasites that moving up from one trophic level to another made a significant contribution to the level of overdispersion in the parasite species.  

Moving up a food chain is an essential component of the life cycle of many aquatic parasites.  Here we examine aggregation levels for aquatic parasites reported  in the literature to see if indeed there is a general association between aggregation and trophic level.  We then consider the theoretical basis for the result.

\section{Methods}

\subsection{Empirical data}

The main measure for the degree of aggregation in samples of aquatic parasites reported in the literature is the variance (or standard deviation or standard error).  From this an `index of dispersion' can generally be calculated which is the variance divided by the mean \cite{CL:66}.  In parasite populations the index varies with the mean \cite{Scott:87,Poulin:93} so to provide a comparative figure, the index was calculated for a mean of 10 parasites per host using Taylor's power law \cite{Taylor:61}.

Values were extracted from over 300 papers, from the Journal of Parasitology 1998 -- July 2015, International Journal for Parasitology 1995 -- July 2015, Journal of Fish Biology, 1998 -- July 2015, Journal of Helminthology 2000 -- July 2015 and Comparative Parasitology 2000 -- July 2015, plus \cite{BM:55,AKKK:97,EEABSM:15}, and excluding data used in \cite{Lester:12}. The data is available in the electronic supplementary material.  Values were from wild fish in sample sizes of 20 or greater.  Data from mean abundances below 1 were excluded except for the two groups in invertebrates where no data was found with means of 1 or greater \cite{Crofton:71,Weinstein:72,Pearre:76,JK:85}.  Bird and mammal data were from \cite{RLC:99,SSOM:03,MAA:11,VMGRFL:11,RSKKP:13,SPCC:15}. 

The 1000+ variances were subdivided into taxonomic groups so that there were at least 20 data points in each group as far as possible.  The log variances were plotted against the log means (henceforth referred to as the log variance/log mean graph) for each group to reveal their linear relationship \cite{Taylor:61}. The log variance at a log mean abundance of 1 (that is, a mean of 10 parasites per fish) was estimated from the relationship and the index of dispersion derived.

Where intensity data was provided rather than abundance (that is, where zeros were excluded), means and variances were considered as abundances when prevalence was 95\% or greater.  Where data was at variance with other data in the publication, an email was sent to the author establish the correct information.  In addition, two samples from \cite{RHH:95} were excluded.  In one, one fish had 100 parasites and the other 64 fish had zero. In a second, one fish had 580 parasites and the other 30 fish had a mean of 1.   These extremes were far outside the rest of the data and were considered not to represent any trend.

\subsection{Stochastic model of parasite acquisition}

We consider a simple ecosystem consisting of parasites, prey fish and predator  fish.  To facilitate the development of the model, the ecosystem is assumed to be in equilibrium so the population sizes and age structures remain constant.  This may not be applicable in all the empirical data because samples from different areas have sometimes been combined and there may be long term changes in the host populations \cite{TBDWCP:11}.  A second assumption is that fish do not acquire immunity to the parasites.  Many of the parasite groups referred to accumulate in fish as the fish age, for example  \emph{Anisakis simplex} in salmon \cite{BM:55}, \emph{Tentacularia coryphenae} in skipjack tuna \cite{LBH:85}, \emph{Grillotia branchi} and \emph{Otobothrium cysticum} in \emph{Scomberomorus commerson} \cite{WL:06} suggesting little immunity against the parasites in these hosts.  A third assumption is that there is no parasite-associated host mortality.  Evidence for such mortality in wild fish has been difficult to obtain.  In most groups there appears to be little effect, and apart from some notable exceptions \cite{FKNRJK:11}, the concept that parasites tend to evolve to minimise host mortality is widely held \cite{LM:94}.  A final assumption is that there is no parasite mortality.  A few of the parasites considered are thought to have life spans less than that of the host, such as adult acanthocephalans \cite{Moller:76}, gyrodactylids \cite{SN:84} and some adult digenea \cite{MB:69}.  Many of the others such as larval trypanorhynchs and juvenile anisakids are thought to survive in fish for years \cite{LBH:85} suggesting parasite mortality in these groups may be minimal.  The effects of host and parasite mortality on parasite distributions has been investigated by others \cite{AM:78,Isham:95,BP:00} and are not incorporated into the model here.  

The model below first assumes that a predator encounters a random member of the prey population at times following a non-homogeneous Poisson process. In many predator species, the prey size changes as the predator ages \cite{ST:05,GKFM:09,SDPF:14,LGL:15} and hence the availability of a particular parasite distributions also changes. The model, therefore, is further developed to clarify the effect of a change in prey age on the parasite distribution in the predator.

Let $ X_{t} $ denote the parasite burden of a prey aged $ t $. The prey are born free of parasites so $ X_{0} = 0 $.  As parasites are assumed to survive for the lifetime of the host, $ X_{t} $ is a non-decreasing integer valued stochastic process. The first and simplest example is of a prey that takes infective particles at random, that is following a Poisson process. In this case, the parasite burden of a prey aged $ t $ has a Poisson distribution and, for any $ h > 0 $,
\[
P\left(X_{t+h} - X_{t} = k \mid X_{s}, 0\leq s\leq t\right) = \exp\left[-(\Lambda(t+h) - \Lambda(t)) \right] \frac{\left(\Lambda(t+h) - \Lambda(t)\right)^{k}}{k!}, \label{PS:Eq0}
\]
where $ \mathbb{E}(X_{t}) = \Lambda(t) $. 

Let $ Y_{t} $ denote the parasite burden of a predator aged $ t $. The predators are also born free of parasites so $ Y_{0} = 0 $. To allow for changes due to season and life cycle, the predator is assumed to encounter a random member of the prey population at times following a non-homogeneous Poisson process with intensity function $ \phi: [0,\infty) \mapsto [0,\infty) $. Not all encountered prey are consumed, rather encountered prey are consumed with a probability that depends on the age of the prey and possibly the age of the predator. 

Suppose that prey age in the population has a probability density function, which is denoted by $ f_{A}$. When a predator aged $ t $ encounters a prey aged $ u $, the probability that the predator consumes the prey is given by the function $ p(u,t) $. The times at which the predator consumes a prey therefore follows a Poisson process with intensity function $ \psi(t) =  \phi(t)\int^{\infty}_{0} p(u,t) f_{A}(u)du $.

The random variable $ \tilde{A}_{t} $ represents the age of a prey conditional on being consumed by a predator aged $ t $. The probability density function of $ \tilde{A}_{t} $ is
\begin{equation}
f_{\tilde{A}_{t}} (a) = \frac{ p(a,t) f_{A}(a)}{\int_{0}^{\infty} p(u,t) f_{A}(u)du}. \label{Dist:Eq1}
\end{equation}
If the prey is consumed, then the predator is assumed to acquire all parasites that are present in the prey. Let $ \tilde{X}_{t} $ denote the parasite burden of prey consumed by a predator aged $ t $,  that is $ \tilde{X}_{t} = X_{\tilde{A}_{t}} $. Using standard conditioning arguments, the mean and variance of the predator's parasite burden are given by
\begin{eqnarray}
\mathbb{E}(Y_{t})  & = & \int^{t}_{0} \mathbb{E}(\tilde{X}_{s}) \psi(s) ds \label{PS:Eq1} \\
\mbox{var}(Y_{t}) & = & \int^{t}_{0} \left(\mbox{var}(\tilde{X}_{s}) + \mathbb{E}(\tilde{X}_{s})^{2}\right) \psi(s)  ds \label{PS:Eq2b} \\
& = & \int^{t}_{0} \left(\frac{\mathbb{E}(\tilde{X}^{2}_{s})}{\mathbb{E}(\tilde{X}_{s})}\right)  \mathbb{E}(\tilde{X}_{s})  \psi(s) ds. \label{PS:Eq2}
\end{eqnarray}
When the distribution of $ \tilde{X}_{t} $ does not depend on $ t $, equation (\ref{PS:Eq2b}) reduces to the law of total variance.

To analyse this model, stochastic ordering properties and in particular the likelihood ratio ordering of random variables are used. Let $ U $ and $ V $ both be either continuous or discrete random variables with $ p_{U} $ and $ p_{V} $ denoting their respective probability densities, if they are continuous, or probability mass functions, if they are discrete. The random variable $ U $ is said to be smaller than $ V $ in the likelihood ratio ordering, denoted $ U \leq_{lr} V $, if $ p_{V}(w)/p_{U}(w) $ is an increasing function of $ w $ over the union of the supports of $ U $ and $ V $ \cite[Section 1.C.1]{SS:07}. To make the model more suited to exploiting the properties of stochastic ordering, we impose the following assumptions.
\begin{itemize}
\item[(A)] For any $ s \leq t $, $ X_{s} \leq_{lr} X_{t} $. 
\item[(B)] For any $ s < t $, the ratio 
\begin{equation} \label{StochOrder1}
\frac{p(u,t)}{p(u,s)} 
\end{equation}
is a non-decreasing function of  $u $. 
\end{itemize}
Assumption (A) is a technical assumption. It is satisfied by the Poisson process model for a prey's parasite burden as well as a number of other non-decreasing integer valued stochastic process such as certain mixed Poisson processes \cite{Grandell:97} and the negative binomial L\'{e}vy process \cite{KP:09}. Assumption (B) may be interpreted as follows: Suppose a prey were consumed by one of two predators and that both predators were equally likely to encounter the prey. Then the older predator becomes more likely to have consumed the prey as the age of the prey increases. It follows immediately from Assumption (B) that for any $ s \leq   t $, $ \tilde{A}_{s} \leq_{lr} \tilde{A}_{t} $. Therefore, older predators tend to consume older prey than younger predators in the sense that, if $ s \leq t $, then $ \Pr ( \tilde{A}_{t} > a) \geq \Pr (\tilde{A}_{s} > a ), $ for all $ a \geq 0 $ \cite[Theorem 1.C.1]{SS:07}. 

\section{Results} 

\subsection{Empirical data}

Data was taken from publications by over 200 authors. The results were relatively consistent, especially  for individual species or closely related species groups. For example, the relationship between the log means and log variances of cystacanths of \emph{Corynosoma} spp. from 9 senior authors and 43 fish samples showed a good correlation  ($R^{2}=0.91$, Fig. \ref{Fig0}).  The log variance at a log mean of 1.0 gave an index of dispersion of 20.4. 

Indices of dispersion for 28 parasite groups are given in Table 1.  Seven of the eight groups with the lowest scores are parasites that have not moved up a food chain, the exception being the hemiurid from chaetognaths which may or may not have been in an earlier host. Of the remaining 20 parasite groups, all but, the diplectanids, ancyrocephalids and metacercariae,  are thought to have arrived in the host through being eaten in a prey item. The dispersion indices are consistent with the premise that moving up a food chain is frequently associated with an increase in overdispersion.

Not included in the table is a high index for \emph{Gyrodactylus} spp., 43.7 (from 16 data points from 3 papers).  Gyrodactylids present a special case as the parasites reproduce on the fish. Also omitted are the catchall groups `fish metacestodes excluding \emph{Grillotia} and diphyllobothriids' and `other larval Acanthocephala'  The data from these were highly variable ($R^{2}= 0.38 $ and $ 0.60$).

\subsection{Model: Fixed prey selection} \label{Sec:FPS}
Suppose that the predator's selection of prey remains constant throughout its life so the ratio (\ref{StochOrder1}) does not depend on $ u $. Since the distribution of $ \tilde{X}_{t} $ no longer depends on $ t $, we let $ \tilde{X} $ denote the parasite burden of the consumed prey. Let $ \Psi(t) = \int^{t}_{0} \psi(s) ds $, which is the expected number of prey consumed during the period $ [0,t] $. From equations (\ref{PS:Eq1}) and (\ref{PS:Eq2}), $ \mathbb{E}(Y_{t}) = \mathbb{E}(\tilde{X}) \Psi(t) $ and 
\begin{equation}
\mbox{var}(Y_{t}) = \mathbb{E}(\tilde{X}^{2}) \Psi(t) = \frac{\mathbb{E}(\tilde{X}^{2})}{\mathbb{E}(\tilde{X})}  \mathbb{E}(Y_{t}).  \label{FPS:Eq0}
\end{equation}
Equation (\ref{FPS:Eq0}) implies the index of dispersion for the predator's parasite burden is constant throughout its life when prey selection is fixed. Hence, the feeding rate of the predator has no effect on the aggregation in its parasite burden. The log variance/log mean graph is a straight line with a slope of one. The intercept is given by $ \log (\mathbb{E}( \tilde{X}^{2})/\mathbb{E}(\tilde{X})) $, and is typically positive. For example, if $ X_{t} $ is a mixed Poisson process or if $ \mathbb{E}(\tilde{X}) > 1 $, then the intercept is positive.

To see the effect of trophic level in this setting, suppose there is a second predator in the ecosystem that consumes the original predator. Assuming the second predator selects its prey in the same manner throughout its life, equation (\ref{FPS:Eq0}) still applies with the original predator now acting as the prey. The log variance/log mean graph of the second predator's parasite burden is again a straight line with a slope of one. Let $ \tilde{Y} $ denote the parasite burden of the first predator given it has been consumed by the second predator, then
\begin{equation}
\frac{\mathbb{E}(\tilde{Y}^{2})}{\mathbb{E}(\tilde{Y})}  \geq \mathbb{E}(\tilde{Y}) + \frac{\mathbb{E}(\tilde{X}^{2})}{\mathbb{E}(\tilde{X})}. \label{FPS:Eq1} 
\end{equation}
The proof of inequality (\ref{FPS:Eq1}) is given in the appendix. Inequality (\ref{FPS:Eq1}) shows the intercept for the second predator is greater than the intercept of the first. If the prey accumulates parasites following a Poisson process, and the predators do not change their prey selection, then the parasite burden becomes more overdispersed as the parasite passes up a food chain. This is illustrated in Figure \ref{FigFPS}.

\subsection{Prey selection depending on predator age}
As noted previously, prey selection typically changes with the age of the predator. In this case, the relationship between the aggregation of parasites in predator and prey is more complex. The following result provides some basic inequalities relating the index of dispersion for the predator and prey.
\begin{theorem} \label{Thm2}
Suppose Assumptions (A) and (B) hold. Then 
\begin{equation}
\frac{\mbox{var}(Y_{t})}{\mathbb{E}(Y_{t})}  \leq \frac{\mathbb{E}(\tilde{X}^{2}_{t})}{\mathbb{E}(\tilde{X}_{t})}, \label{Thm2:Eq0}
\end{equation}
and
\begin{eqnarray}
\frac{\mbox{var}(Y_{t})}{\mathbb{E}(Y_{t})} & \geq & \frac{1}{\Psi(t)} \int^{t}_{0} \left(\frac{\mathbb{E}(\tilde{X}^{2}_{s})}{\mathbb{E}(\tilde{X}_{s})}\right) \psi(s) ds  \label{Thm2:Eq1}\\
& \geq &  \int^{\infty}_{0} \left(\frac{\mbox{var}(X_{r})}{\mathbb{E}(X_{r})} + \mathbb{E}(X_{r}) \right) \left(  \frac{1}{\Psi(t)} \int^{t}_{0} f_{\tilde{A}_{s}} (r)  \psi(s) ds \right) dr. \label{Thm2:Eq2}
\end{eqnarray}
\end{theorem}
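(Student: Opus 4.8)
The plan is to express the predator's index of dispersion as a weighted average along the predator's life and then to exploit two monotonicity properties that follow from Assumptions (A) and (B). Writing $ m(s) = \mathbb{E}(\tilde{X}_s) $ and $ g(s) = \mathbb{E}(\tilde{X}^2_s)/\mathbb{E}(\tilde{X}_s) $, equations (\ref{PS:Eq1}) and (\ref{PS:Eq2}) give
\[
\frac{\mbox{var}(Y_t)}{\mathbb{E}(Y_t)} = \frac{\int_0^t g(s)\, m(s)\, \psi(s)\, ds}{\int_0^t m(s)\,\psi(s)\, ds},
\]
a ratio of averages over $ s \in [0,t] $ taken against the measure $ \psi(s)\, ds $. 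Normalising this to the probability measure $ \mu(ds) = \psi(s)\,ds/\Psi(t) $ on $ [0,t] $ will let me read each inequality as a statement about weighted means.

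The key lemma I would prove first is that $ \tilde{X}_s \leq_{lr} \tilde{X}_t $ whenever $ s \leq t $. Assumption (A) says the mass functions of $ \{X_a\} $ are totally positive of order two (TP2) in the pair (age, count), while the excerpt already records that (B) yields $ \tilde{A}_s \leq_{lr} \tilde{A}_t $, i.e. the mixing densities $ f_{\tilde{A}_s} $ are TP2 in (index, age). Since $ \tilde{X}_s = X_{\tilde{A}_s} $ has mass function obtained by integrating the first kernel against the second, Karlin's basic composition theorem gives that it is TP2 in (index, count), which is exactly $ \tilde{X}_s \leq_{lr} \tilde{X}_t $. I expect this composition step to be the main obstacle; the rest are consequences. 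Likelihood-ratio order implies the usual stochastic order, so $ m $ is non-decreasing, and since likelihood-ratio order is preserved under size-biasing (multiplying both mass functions by the count preserves a monotone ratio), the size-biased means $ g(s) $ are non-decreasing as well. Applying the same size-biasing observation directly to (A) shows $ r \mapsto \mathbb{E}(X_r^2)/\mathbb{E}(X_r) $ is non-decreasing.

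Inequality (\ref{Thm2:Eq0}) is then immediate: the displayed ratio is a weighted average of the values $ g(s) $ for $ s \leq t $, and since $ g $ is non-decreasing this average is at most $ g(t) = \mathbb{E}(\tilde{X}^2_t)/\mathbb{E}(\tilde{X}_t) $. For (\ref{Thm2:Eq1}) the claim becomes $ \mathbb{E}_{\mu}(gm)/\mathbb{E}_{\mu}(m) \geq \mathbb{E}_{\mu}(g) $, equivalently $ \mbox{Cov}_{\mu}(g,m) \geq 0 $; this holds by Chebyshev's integral (correlation) inequality because $ g $ and $ m $ are both non-decreasing.

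For (\ref{Thm2:Eq2}) I would use the identity
\[
g(s) = \int_0^\infty \frac{\mathbb{E}(X_r^2)}{\mathbb{E}(X_r)}\, q_s(r)\, dr, \qquad q_s(r) = \frac{\mathbb{E}(X_r)\, f_{\tilde{A}_s}(r)}{\mathbb{E}(\tilde{X}_s)},
\]
in which $ q_s $ is the age density $ f_{\tilde{A}_s} $ size-biased by the non-decreasing function $ \mathbb{E}(X_r) $. Averaging against $ \mu $ rewrites the middle quantity $ \Psi(t)^{-1}\int_0^t g(s)\psi(s)\,ds $ as $ \int_0^\infty (\mathbb{E}(X_r^2)/\mathbb{E}(X_r))\, Q(r)\, dr $ with $ Q(r) = \Psi(t)^{-1}\int_0^t q_s(r)\psi(s)\,ds $, while the right-hand side of (\ref{Thm2:Eq2}) is the same integrand against $ w(r) = \Psi(t)^{-1}\int_0^t f_{\tilde{A}_s}(r)\psi(s)\,ds $. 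Because size-biasing by a non-decreasing function shifts each $ f_{\tilde{A}_s} $ up in stochastic order, and mixing with common weights preserves this, $ Q $ stochastically dominates $ w $; integrating the non-decreasing function $ \mathbb{E}(X_r^2)/\mathbb{E}(X_r) $ against the dominating law then gives (\ref{Thm2:Eq2}).
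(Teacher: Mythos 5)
Your proposal is correct and follows essentially the same route as the paper: the same representation of $\mbox{var}(Y_t)/\mathbb{E}(Y_t)$ as a weighted average, the same key lemma that $\tilde{X}_s \leq_{lr} \tilde{X}_t$ (which the paper gets by citing the mixture-preservation theorem for likelihood-ratio order rather than re-proving it via TP2 composition), and the same monotonicity-plus-Chebyshev argument — your covariance and size-biased stochastic-dominance phrasings are just equivalent restatements of the paper's Chebyshev sum inequality.
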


Inequality  (\ref{Thm2:Eq0}) bounds the index of dispersion of the predator aged $ t $ in terms of the parasite burden of prey consumed at aged $ t $. This bound implies that an increase in the index of dispersion with trophic level may not be observed in the empirical data if the data is collected from prey that is older than what the predator typically consumes. The quantity $ \Psi(t)^{-1} \int^{t}_{0} f_{\tilde{A}_{s}} (r)  \psi(s)  ds $ appearing in inequality (\ref{Thm2:Eq2}) is the probability density of the age of a random selected prey that was consumed by a predator during the period $ [0, t] $. Inequality  (\ref{Thm2:Eq2}) shows that the index of dispersion of the predator is bounded from below by the average of the index of dispersion of the prey it consumed plus the average parasite burden of its prey. Equality holds in (\ref{Thm2:Eq0}) and (\ref{Thm2:Eq1}) if, for all $ s $ and $ t $, the ratio (\ref{StochOrder1}) does not depend on $ u $, that is, if the distribution of consumed prey's age does not depend on the age of the predator. 

In empirical data, the exact relationship between mean and variance is difficult to identify. In the previous section, we saw that our modelling assumptions imply that the slope in the log variance/log mean graph is one when prey selection is fixed. The following results show that allowing prey selection to depend on the predator's age results in the slope being greater than one.

\begin{theorem} \label{Thm1}
Suppose Assumptions (A) and (B) hold. Then 
\begin{equation}
\frac{ d \log \mbox{var}(Y_{t})}{d \log \mathbb{E}(Y_{t}) } \geq 1, \label{Thm1:Eq1}
\end{equation}
with equality if and only if for all $ s \in [0,t] $, $ \mathbb{P}(\tilde{X}_{t} = m \mid \tilde{X}_{t} > 0) = \mathbb{P}(\tilde{X}_{s} = m \mid \tilde{X}_{s} > 0 ) $ for all $ m \geq 1 $. 
\end{theorem}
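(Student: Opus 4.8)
The plan is to reduce inequality (\ref{Thm1:Eq1}) to the already-available bound (\ref{Thm2:Eq0}), and then to extract the equality case by a monotonicity argument. First I would differentiate the integral representations: from (\ref{PS:Eq1}) and (\ref{PS:Eq2b}) the fundamental theorem of calculus gives $\frac{d}{dt}\mathbb{E}(Y_{t}) = \mathbb{E}(\tilde{X}_{t})\psi(t)$ and $\frac{d}{dt}\mbox{var}(Y_{t}) = \mathbb{E}(\tilde{X}_{t}^{2})\psi(t)$. Viewing $\log\mbox{var}(Y_{t})$ as a function of $\log\mathbb{E}(Y_{t})$ through the common parameter $t$ and applying the chain rule yields
\[
\frac{d\log\mbox{var}(Y_{t})}{d\log\mathbb{E}(Y_{t})} = \frac{\mathbb{E}(\tilde{X}_{t}^{2})\psi(t)/\mbox{var}(Y_{t})}{\mathbb{E}(\tilde{X}_{t})\psi(t)/\mathbb{E}(Y_{t})} = \frac{\mathbb{E}(\tilde{X}_{t}^{2})/\mathbb{E}(\tilde{X}_{t})}{\mbox{var}(Y_{t})/\mathbb{E}(Y_{t})},
\]
valid wherever $\mathbb{E}(\tilde{X}_{t})\psi(t) > 0$. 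Inequality (\ref{Thm2:Eq0}) says precisely that the numerator dominates the denominator, so this ratio is at least one, which is (\ref{Thm1:Eq1}).

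For the equality case I would track when (\ref{Thm2:Eq0}) is tight. Writing $h(s) = \mathbb{E}(\tilde{X}_{s}^{2})/\mathbb{E}(\tilde{X}_{s})$ and using (\ref{PS:Eq1})--(\ref{PS:Eq2b}), the left side of (\ref{Thm2:Eq0}) is the weighted average
\[
\frac{\mbox{var}(Y_{t})}{\mathbb{E}(Y_{t})} = \frac{\int_{0}^{t} h(s)\,\mathbb{E}(\tilde{X}_{s})\psi(s)\,ds}{\int_{0}^{t} \mathbb{E}(\tilde{X}_{s})\psi(s)\,ds},
\]
while the right side is the endpoint value $h(t)$. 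The crux is therefore to show $h$ is non-decreasing on $[0,t]$: a weighted average then cannot exceed the endpoint value (reproving (\ref{Thm2:Eq0})), and equality forces $h(s) = h(t)$ for almost every $s \leq t$ relative to the weight $\mathbb{E}(\tilde{X}_{s})\psi(s)$, which under the natural positivity of $\psi$ on $[0,t]$ means all $s$.

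To establish that $h$ is non-decreasing I would first upgrade the ordering $\tilde{A}_{s} \leq_{lr} \tilde{A}_{t}$ (already derived from Assumptions (A) and (B)) to $\tilde{X}_{s} \leq_{lr} \tilde{X}_{t}$. Since $\tilde{X}_{t} = X_{\tilde{A}_{t}}$ is a mixture of the family $\{X_{r}\}$ and Assumption (A) makes the transition density $TP_{2}$ in $(r,x)$, Karlin's basic composition theorem shows the likelihood ratio order is preserved under this mixing, giving $\tilde{X}_{s} \leq_{lr} \tilde{X}_{t}$. Next I would observe that $h(s) = \mathbb{E}(\tilde{X}_{s}^{*})$ is the mean of the size-biased variable $\tilde{X}_{s}^{*}$ with mass function proportional to $k\,\mathbb{P}(\tilde{X}_{s}=k)$; size-biasing preserves $\leq_{lr}$ (the ratio of the size-biased densities is the original likelihood ratio times a positive constant), so $\tilde{X}_{s}^{*} \leq_{lr} \tilde{X}_{t}^{*}$, hence $\tilde{X}_{s}^{*}$ is smaller than $\tilde{X}_{t}^{*}$ in the usual stochastic order, and taking means gives $h(s) \leq h(t)$. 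This monotonicity, resting on the two preservation facts, is the main obstacle; once it is in hand the rest is bookkeeping.

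Finally I would convert the condition $h(s) = h(t)$ into the stated form. Two variables ordered in the usual stochastic order with equal means are equal in distribution, so $h(s) = h(t)$ forces $\tilde{X}_{s}^{*}$ and $\tilde{X}_{t}^{*}$ to be equal in distribution, i.e. $\mathbb{P}(\tilde{X}_{s}=k)/\mathbb{E}(\tilde{X}_{s}) = \mathbb{P}(\tilde{X}_{t}=k)/\mathbb{E}(\tilde{X}_{t})$ for every $k \geq 1$; summing over $k \geq 1$ and dividing recovers $\mathbb{P}(\tilde{X}_{s}=m \mid \tilde{X}_{s}>0) = \mathbb{P}(\tilde{X}_{t}=m \mid \tilde{X}_{t}>0)$ for all $m \geq 1$, the condition in the statement. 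The converse is immediate, since $h(s) = \mathbb{E}(\tilde{X}_{s}^{2} \mid \tilde{X}_{s}>0)/\mathbb{E}(\tilde{X}_{s} \mid \tilde{X}_{s}>0)$ depends only on the law of $\tilde{X}_{s}$ given positivity, so equal conditional laws make $h$ constant and equality holds.
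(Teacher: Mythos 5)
Your proof is correct, and it splits naturally into a part that mirrors the paper and a part that is genuinely different. For the inequality itself you follow the same path: the paper's Lemma \ref{Lem2}, applied with $g(s)=\mathbb{E}(\tilde X_s)\psi(s)$ and $f(s)=\mathbb{E}(\tilde X_s^2)/\mathbb{E}(\tilde X_s)$, is exactly your chain-rule computation combined with the observation that $\mbox{var}(Y_t)/\mathbb{E}(Y_t)$ is a weighted average of the non-decreasing function $f$ and hence cannot exceed its endpoint value $f(t)$; the needed monotonicity of $f$ is obtained in both cases from $\tilde A_s\leq_{lr}\tilde A_t$ and preservation of $\leq_{lr}$ under $\mathrm{TP}_2$ mixing (Theorem 1.C.17 of Shaked--Shanthikumar). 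Where you diverge is the equality case. The paper converts $f(s)=f(t)$ into the conditional-law statement via the distributional identity $\tilde X_s^2\tilde X_t\stackrel{d}{=}\tilde X_s\tilde X_t^2$ for independent copies, an evaluation at square-free integers $m$ (which forces $\tilde X_s=1$ in the product), and then the likelihood-ratio order to extend (\ref{Thm1:Proof:Eq1}) from square-free $m$ to all $m$. Your size-biasing argument replaces all of this: since $f(s)$ is the mean of the size-biased variable $\tilde X_s^{\ast}$, and size-biasing preserves $\leq_{lr}$ and hence the usual stochastic order, equal means force $\tilde X_s^{\ast}\stackrel{d}{=}\tilde X_t^{\ast}$, which is precisely the proportionality of the mass functions on $\{1,2,\dots\}$ and yields the stated condition after one normalisation; it also makes the ``if'' direction immediate. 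This is shorter and avoids the number-theoretic device entirely. The only caveats, which you share with the paper, are the implicit positivity assumptions: you need $\psi(s)\mathbb{E}(\tilde X_s)>0$ on $[0,t]$ both to justify the chain rule and to upgrade ``$f(s)=f(t)$ for almost every $s$ with respect to the weight'' to ``for all $s$'' (the paper builds this into Lemma \ref{Lem2} by requiring $f$ and $g$ to be strictly positive).
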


\begin{corollary} \label{Cor1}
Suppose Assumptions (B) holds and $ X_{t} $ is a Poisson process. Then equality holds in (\ref{Thm1:Eq1}) if and only if the ratio (\ref{StochOrder1}) is constant in $ u $ for all $ s \in [0,t] $. 
\end{corollary}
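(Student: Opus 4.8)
The plan is to deduce the corollary from Theorem \ref{Thm1} by translating its equality condition into a statement about the prey-age distribution. By Theorem \ref{Thm1}, equality in (\ref{Thm1:Eq1}) holds precisely when the conditional law of $\tilde{X}_s$ given $\tilde{X}_s > 0$ is the same for every $s \in [0,t]$. Since $X_r$ is Poisson with mean $\Lambda(r)$ and $\tilde{X}_s = X_{\tilde{A}_s}$, the variable $\tilde{X}_s$ is mixed Poisson with mixing variable $Z_s := \Lambda(\tilde{A}_s)$; I would record its probability generating function $G_s(z) = \mathbb{E}(e^{Z_s(z-1)})$, which is the Laplace transform of the law $\mu_s$ of $Z_s$ evaluated at $1-z$. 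The whole argument then reduces to showing that, for this mixed Poisson family, equality of the conditional-on-positive laws forces equality of the mixing laws $\mu_s$, after which the monotonicity of $\Lambda$ transfers this back to $p$.

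The converse (sufficiency) direction is short and I would dispatch it first. If the ratio (\ref{StochOrder1}) is constant in $u$, then $p(u,t) = c(s,t)\,p(u,s)$, so by (\ref{Dist:Eq1}) the density $f_{\tilde{A}_s}$ does not depend on $s$; hence the law of $\tilde{X}_s$ is the same for all $s$, the conditional-on-positive laws trivially coincide, and Theorem \ref{Thm1} gives equality.

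For necessity, assume the conditional laws coincide. Dividing out the normalising constant, the ratios $\mathbb{P}(\tilde{X}_s = m)/\mathbb{P}(\tilde{X}_s = 1)$ are then independent of $s$ for every $m \geq 1$, so I can write $\mathbb{P}(\tilde{X}_s = m) = c_s r_m$ with $(r_m)$ fixed and $r_1 = 1$. Summing against $z^m$ and using $G_s(1) = 1$ gives $1 - G_s(z) = c_s\,(R(1) - R(z))$ where $R(z) = \sum_{m \geq 1} r_m z^m$; substituting $\theta = 1 - z$ turns this into $\int_0^\infty (1 - e^{-\theta \lambda})\, d\mu_s(\lambda) = c_s K(\theta)$ for a single fixed function $K$ with $K(0) = 0$. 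Thus the transforms of $\mu_s$ and $\mu_t$ are proportional.

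The crux -- and the step I expect to be the main obstacle -- is pinning down the proportionality constant, since proportional transforms alone determine $\mu_s$ only up to its total mass on $(0,\infty)$ and its atom at $0$. Here I would let $\theta \to \infty$: the left side tends to $\mathbb{P}(Z_s > 0)$, which equals $1$ because $\tilde{A}_s$ has a density (no atom at $0$) and $\Lambda(a) > 0$ for $a > 0$, so $Z_s = \Lambda(\tilde{A}_s) > 0$ almost surely. This forces $c_s K(\infty) = 1$ for all $s$, hence $c_s$ is constant and $\mu_s = \mu_t$ by uniqueness of the Laplace transform. Finally, since $\Lambda$ is strictly increasing, $Z_s \stackrel{d}{=} Z_t$ yields $\tilde{A}_s \stackrel{d}{=} \tilde{A}_t$, so $f_{\tilde{A}_s} = f_{\tilde{A}_t}$; reading (\ref{Dist:Eq1}) off at the two ages then shows $p(u,t)/p(u,s)$ is constant in $u$ on the support of $f_A$, completing the equivalence. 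I would flag the positivity of $\Lambda$ (equivalently, that consumed prey almost surely carry a positive expected burden) as the hypothesis that makes this normalisation step work; without it the atom of $\mu_s$ at $0$ could vary with $s$ and the conditional-on-positive law would no longer determine the full law.
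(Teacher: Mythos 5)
Your proof is correct and takes essentially the same route as the paper's: both reduce the equality condition of Theorem \ref{Thm1} to proportionality of the probability generating functions $1-P_{s}(z)=c\,(1-P_{t}(z))$, invoke identifiability of the mixing distribution of a mixed Poisson law (the paper cites Feller where you use Laplace-transform uniqueness directly), and then force $c=1$ from the fact that the mixing variable has no atom at zero. Your explicit flagging of the hypothesis $\Lambda(a)>0$ for $a>0$ (so that $\tilde{A}_{s}$ is recoverable from $\Lambda(\tilde{A}_{s})$ and the atom-at-zero argument applies) is a point the paper leaves implicit, but otherwise the two arguments coincide.
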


Theorem \ref{Thm1} shows that the slope of the log variance/log mean graph will be greater than one unless the distribution of the parasite intensity in the prey remains constant throughout the predator's life. When the prey accumulates parasites following a Poisson process, this is only possible if the prey selection is fixed. Even if changes in prey selection are restricted to a small part of the predator's life, the slope of the log variance/log mean graph will be greater than one. 

The following example illustrates the effect of prey selection on the log variance/log mean graph. Suppose that parasites accumulate in the prey according to a mixed Poisson process \cite{Grandell:97} satisfying Assumption (A) with rate $ \lambda $ so $ \mathbb{E}(X_{t}) = \mathbb{E}(\lambda) t $ and $ \mbox{var}(X_{t}) = \mathbb{E}(\lambda) t + \mbox{var}(\lambda) t^{2}$. The age distribution of the prey is taken to be Gamma with shape and rate parameters $(\alpha,\beta) $. If the function $ p(a,t) $ is proportional to $ a^{\delta t} \exp(-a \gamma) $, $ \gamma, \delta > 0 $, then the ratio $ p(a,t)/p(a,s) $ is proportional to $ a^{\delta(t-s)} $, which is non-decreasing in $ a $ for $ s < t $. The distribution of $ \tilde{A}_{t} $ is then Gamma with shape and rate parameters $ (\alpha + \delta t, \beta + \gamma) $. Standard calculations show that
\begin{eqnarray}
\mathbb{E}(\tilde{X}_{t}) & = & \mathbb{E}(\lambda) \frac{\alpha + \delta t}{\beta + \gamma} \label{Eg:Eq1} \\
\frac{\mathbb{E}(\tilde{X}_{t}^{2})}{\mathbb{E}(\tilde{X}_{t})} & = & 1+ \frac{\mathbb{E}(\lambda^{2})}{\mathbb{E}(\lambda)} \left( 1+ \frac{\alpha + \delta t}{\beta + \gamma}  \right) \label{Eg:Eq2} 
\end{eqnarray}
In Figure \ref{Fig1}, the log variance/log mean graph is given 
for a range of values of $ \alpha $ and $ \delta $ with $ \mathbb{E}(\lambda) = 1,\ \mathbb{E}(\lambda^{2}) = 1.5,\ \beta+\gamma=1 $ and $ \psi(t) = 1 $ for all $ t \geq 0 $. 
As the mean increases, the slope of the graph appears to be independent of the parameter values chosen. For large $ t $, 
\begin{eqnarray*}
\mathbb{E}(\tilde{X}_{t}) \sim \frac{\mathbb{E}(\lambda) \delta t}{\beta + \gamma}, \quad  \mathbb{E}(\tilde{X}_{t}^{2}) \sim  \mathbb{E}(\lambda^{2}) \left(\frac{\delta t}{\beta + \gamma}\right)^{2}. 
\end{eqnarray*}
With $ \psi(t) = 1 $,  the slope of the log variance/log mean graph approaches $ 3/2 $  when the mean is large.

\subsection{Role of target prey size}
In Section \ref{Sec:FPS} it was observed that if the predator selects its prey in the same manner, regardless of age, then the feeding rate of the predator has no effect on the aggregation in its parasite burden. However, when the predator's prey selection depends on the age of the predator, the feeding rate of the predator can have a considerable impact on the aggregation of its parasite burden. 

To examine the effect of the feeding rate on parasite aggregation, suppose there are two predator species. Quantities relating to the two predators are distinguished through subscripts. The two predators are assumed to consume on average the same amount of biomass, but their feeding patterns differ in that the first species tends to consume many smaller, hence younger, prey fish while the second species consumes few larger, hence older, prey fish. Though fish growth typically slows with age there is generally a strong relationship between size and age \cite{Sale:02}. Assuming the prey accumulates parasites at a constant rate, the expected biomass of prey aged $ t $ is proportional to its expected parasite burden. The following result shows that while predator one has a higher feeding rate than predator two, the parasites are more aggregated in predator two than in predator one.

\begin{theorem} \label{Thm3}
Suppose that Assumptions (A) and (B) hold and that $ \mathbb{E}(X_{t}) \propto t $ for all $ t \geq 0 $. Assume also that, for all $ t \geq 0 $, $ \tilde{A}_{1,t} \leq_{lr} \tilde{A}_{2,t} $  and 
\begin{equation}
\mathbb{E} (\tilde{A}_{1,t}) \psi_{1}(t) = \mathbb{E} (\tilde{A}_{2,t}) \psi_{2}(t) \label{FR:Eq1}
\end{equation}
Then $ \psi_{t}(t) \geq \psi_{2}(t) $ and 
\begin{eqnarray*}
\frac{\mbox{var}(Y_{1,t})}{\mathbb{E}(Y_{1,t})} \leq \frac{\mbox{var}(Y_{2,t})}{\mathbb{E}(Y_{2,t})} 
\end{eqnarray*}
for all $ t \geq 0$.
\end{theorem}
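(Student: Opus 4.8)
The plan is to write each predator's index of dispersion as a weighted average of the size-biased means $ g_i(s) := \mathbb{E}(\tilde{X}_{i,s}^{2})/\mathbb{E}(\tilde{X}_{i,s}) $, and then to show that the two predators share the same weights while $ g_1 \leq g_2 $ pointwise. First I would dispose of the feeding-rate inequality: since $ \tilde{A}_{1,t} \leq_{lr} \tilde{A}_{2,t} $ implies the usual stochastic order and hence $ \mathbb{E}(\tilde{A}_{1,t}) \leq \mathbb{E}(\tilde{A}_{2,t}) $, the biomass-matching identity (\ref{FR:Eq1}) immediately forces $ \psi_1(t) \geq \psi_2(t) $. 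For the dispersion comparison, equations (\ref{PS:Eq1}) and (\ref{PS:Eq2}) give
\[
\frac{\mbox{var}(Y_{i,t})}{\mathbb{E}(Y_{i,t})} = \frac{\int_0^t g_i(s)\, \mathbb{E}(\tilde{X}_{i,s})\psi_i(s)\, ds}{\int_0^t \mathbb{E}(\tilde{X}_{i,s})\psi_i(s)\, ds},
\]
a weighted average of $ g_i(s) $ with weight $ w_i(s) = \mathbb{E}(\tilde{X}_{i,s})\psi_i(s) $.

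The two substantive steps concern the integrand and the weight. For the integrand, I would first establish the pointwise likelihood ratio ordering $ \tilde{X}_{1,s} \leq_{lr} \tilde{X}_{2,s} $. Since $ \tilde{X}_{i,s} = X_{\tilde{A}_{i,s}} $ is a mixture of the laws of $ X_u $ over $ u $ distributed as $ \tilde{A}_{i,s} $, Assumption (A) says the mixing kernel $ u \mapsto X_u $ is increasing in the likelihood ratio order (equivalently, the kernel is TP$_2$), and by hypothesis $ \tilde{A}_{1,s} \leq_{lr} \tilde{A}_{2,s} $; the closure of the likelihood ratio order under mixing against such a kernel (a standard property, \cite{SS:07}) then yields $ \tilde{X}_{1,s} \leq_{lr} \tilde{X}_{2,s} $. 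Given this, $ g_i(s) $ is precisely the mean of the size-biased version of $ \tilde{X}_{i,s} $, and size-biasing preserves the likelihood ratio order (the density ratio is only rescaled by the constant $ \mathbb{E}(\tilde{X}_{1,s})/\mathbb{E}(\tilde{X}_{2,s}) $), so $ \tilde{X}_{1,s} \leq_{lr} \tilde{X}_{2,s} $ forces $ g_1(s) \leq g_2(s) $ for every $ s $. I expect this step --- correctly invoking lr-closure under a TP$_2$ mixing kernel, and then the monotonicity of the size-biased mean --- to be the main obstacle; everything else is bookkeeping.

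It remains to see that the weights coincide. Because $ \mathbb{E}(X_u) \propto u $, conditioning on the consumed prey's age gives $ \mathbb{E}(\tilde{X}_{i,s}) = \mathbb{E}(X_{\tilde{A}_{i,s}}) \propto \mathbb{E}(\tilde{A}_{i,s}) $, so $ w_i(s) $ is proportional to $ \mathbb{E}(\tilde{A}_{i,s})\psi_i(s) $, which by (\ref{FR:Eq1}) does not depend on $ i $. Writing $ w(s) $ for this common weight (so that, incidentally, $ \mathbb{E}(Y_{1,t}) = \mathbb{E}(Y_{2,t}) $), the displayed ratio becomes $ \int_0^t g_i(s) w(s)\, ds \big/ \int_0^t w(s)\, ds $, and the pointwise inequality $ g_1(s) \leq g_2(s) $ together with $ w(s) \geq 0 $ delivers $ \mbox{var}(Y_{1,t})/\mathbb{E}(Y_{1,t}) \leq \mbox{var}(Y_{2,t})/\mathbb{E}(Y_{2,t}) $, completing the proof.
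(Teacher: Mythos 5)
Your proposal is correct and follows essentially the same route as the paper: the same derivation of $\psi_1(t)\geq\psi_2(t)$ from the mean ordering implied by $\leq_{lr}$, the same use of lr-closure under mixing (the paper's citation of \cite[Theorem 1.C.17]{SS:07}) to get $\tilde{X}_{1,s}\leq_{lr}\tilde{X}_{2,s}$, the same ordering of the size-biased means $\mathbb{E}(\tilde{X}_{i,s}^2)/\mathbb{E}(\tilde{X}_{i,s})$ (the paper's Theorem 1.C.20), and the same observation that $\mathbb{E}(X_u)\propto u$ plus (\ref{FR:Eq1}) equalises the weights $\mathbb{E}(\tilde{X}_{i,s})\psi_i(s)$ and hence the means $\mathbb{E}(Y_{i,t})$. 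Your weighted-average phrasing is just a slightly more explicit packaging of the paper's final comparison of the integrands in (\ref{PS:Eq2}).
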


Suppose we take the same scenario as used in Figure \ref{Fig1} except that the function $ p(a,t) $ is now proportional to $ a^{\delta(t)} \exp(-a \gamma) $, where $ \delta(t) $ is an increasing function and $ \gamma > 0 $. Equations (\ref{Eg:Eq1}) and (\ref{Eg:Eq2}) hold with $ \delta t $ replaced by $ \delta(t) $. In Figure \ref{Fig2}, $ \mathbb{E}(\lambda) =1,\ \mathbb{E}(\lambda^{2})=1.5,\ \beta+\gamma = 1$ and $ \alpha = 0.1 $ for both predators. For the first predator we set $ \delta_{1}(t) = t $ and $ \psi_{1}(t) = 1 $ for all $ t \geq 0$, and for the second predator $ \delta_{1}(t) = t + t^{2} $ and $ \psi_{2}(t) = (0.1 + t)/(0.1 + t + t^{2}) $. As the mean parasite burden increases, the slope of the log variance/log mean graph for the second predator approaches $ 2 $ whereas for the first predator the slope approaches $ 3/2$.

\section{Discussion}

In the empirical data, all the fish parasite distributions were overdispersed, even those with a single host in the life cycle.  The latter group had indices around 10 suggesting that major sources of aggregation other than host number are operating on the parasites.  Nevertheless the ordination of parasite groups according to their index shows that those that pass through earlier hosts tend to be more overdispersed than monoxenous parasites, a result in agreement with Lester \cite{Lester:12} who used a different data set.

Insufficient data were available to permit the evaluation of the degree of overdispersion for one parasite species at different stages of its life cycle.  However, trends can be discerned.  The index  for cystacanths in gammarids was estimated to be 2.7 (Table \ref{Tab1}). The indices for \emph{Corynosoma} cystacanths in fish was 20.4 and for \emph{Corynosoma} adults in seals was 24.0, thus though data from seals is sparse, the results conform to a general increase in the index of dispersion at each stage of the life cycle.  Acanthocephalans as adults in fish also had a high index (23.4).

Hemiurids in planktonic invertebrates appeared to be almost randomly distributed, that is the variance was close to the mean, and had an index of 2.6.  Adult hemiurids in fish were more aggregated with an index of 22.9.  

Metacercariae in fish had high indices of 28.8 and 28.2.  That of adult  digeneans in fish eating birds and mammals was higher at 39.8.  The indices for the metacercariae could be a consequence of molluscs releasing clouds of cercariae so that contact with the cercariae by fish is not well modelled by a Poisson process.

The monogenean groups of capsalids, dactylogyrids and polyopisthocotyleans all had low indices (8.7, 10.0 and 11.0).  In contrast diplectanids and ancyrocephalids, also with one-host life cycles, had high indices (28.8 and 23.4).  Again the infection process in these groups may not be well modelled by a Poisson process.  Eggs of the diplectanid \emph{Allomurraytrema robustum}  become entangled among adults on the gills and the diplectanid \emph{Lamellodiscus acanthopagri} actually attaches eggs to the gills. Larvae that hatch from these eggs can attach to adjacent filaments thus greatly increasing the chances of the fish obtaining a subsquent infection.  In contrast the polyopisthocotylean \emph{Polylabroides multispinosus} sheds its eggs into the water column, and none of its eggs are attached to the host \cite{Roubal:95}.

The conclusions here concur with the findings of Dobson and Merenlender \cite{DM:91} and Shaw and Dobson \cite{SD:95} who suggested that there was a tendency for aggregation to be greater in definitive hosts compared to intermediate hosts.  Poulin \cite{Poulin:13} apparently using essentially the same data base as used here, failed to detect any differences in overdispersion between monoxenous and heteroxenous parasites, possibly because he incorporated data from samples with as few as 6 hosts per sample and used taxonomic groups that were more ecologically diverse than used here.

Taylor and Woiwod \cite{TW:82} and Anderson and Gordon \cite{AG:82} concluded that the values in the log variance/log mean relationship were characteristic of a species at a particular point in space and time.  For aquatic parasites the stage in the life cycle appears to be important.  For example, the combined slope for \emph{Anisakis simplex} from 6 authors for 6 different fish species was 1.74 (Table 1).  The slopes for \emph{Anisakis} 1 used in \cite{Lester:12}  were 1.47 for a crustacean-eating fish and 1.53, 1.81 and 1.97 for three piscivorous fishes.  The indices of dispersion were, 5.1, 20.2, 36.3 and 72.4 (from \cite{Lester:12}). The results are consistent with differences between host species for what appears to be the same parasite at different trophic levels as predicted by the model.  

The empirical data demonstrate a strong association between the level of aggregation of aquatic parasites and the trophic level of their hosts in relation to the parasite life cycles.   There is good support for such a link from mathematical theory. To confirm the association, data are required on the levels of aggregation of a single population of parasites as it moves up a food chain.

\section{Appendix: Proofs}

The proofs of equation (\ref{FPS:Eq1}) and Theorem \ref{Thm2} use a continuous version of the Chebyshev sum inequality. This inequality states that for any non-decreasing functions $ g $ and $ h $ on $ \mathbb{R} $ and any probability measure $ \mu $ on $ \mathbb{R} $
\begin{equation}
\int g(x) h(x) \mu(dx) \geq \int g(x) \mu(dx) \int h(x) \mu(dx). \label{CSI}
\end{equation}

\subsection{Proof of inequality (\ref{FPS:Eq1})}
Let $ f_{\tilde{A}} $ denote the probability density function of the age of prey consumed by the second predator. Then
\[
\frac{\mathbb{E}(\tilde{Y}^{2})}{\mathbb{E}(\tilde{Y})} = \frac{\int \mathbb{E}(Y^{2}_{s}) f_{\tilde{A}}(s) ds}{\int\mathbb{E}(Y_{s}) f_{\tilde{A}}(s) ds}.
\]
From equation (\ref{PS:Eq1}), $ \mathbb{E}(Y_{t}) $ is increasing, and since the ratio $ \mbox{var}(Y_{t})/\mathbb{E}(Y_{t}) $ is constant by equation (\ref{FPS:Eq0}), $ \mathbb{E}(Y^{2}_{t})/\mathbb{E}(Y_{t}) $ is also increasing. Applying inequality (\ref{CSI}) gives 
\begin{equation}
\frac{\mathbb{E}(\tilde{Y}^{2})}{\mathbb{E}(\tilde{Y})} \geq \int \frac{\mathbb{E}(Y^{2}_{s})}{\mathbb{E}(Y_{s})} f_{\tilde{A}}(s) ds = \int \left( \mathbb{E}(Y_{s}) + \frac{\mbox{var}(Y_{s})}{\mathbb{E}(Y_{s})} \right) f_{\tilde{A}}(s) ds. \label{FPS:Eq2}
\end{equation}
Inequality (\ref{FPS:Eq1}) follows by substituting the expression for $ \mbox{var}(Y_{t}) $ in equation (\ref{FPS:Eq0}) into inequality (\ref{FPS:Eq2}). 

\subsection{Proof of Theorem \ref{Thm2}}

As noted earlier, an immediate consequence of Assumption (B) is that $ \tilde{A}_{s} \leq_{lr} \tilde{A}_{t} $ for all $ s \leq t $.  From  Assumption (A) and \cite[Theorem 1.C.17]{SS:07}, $ \tilde{A}_{s} \leq_{lr} \tilde{A}_{t} $ implies that $ \tilde{X}_{s} \leq_{lr} \tilde{X}_{t} $. Therefore,  $ \mathbb{E}(\tilde{X}_{t}) $ is a non-decreasing function of $ t $, and by \cite[Theorem 1.C.20]{SS:07} so is $  \mathbb{E}(\tilde{X}^{2}_{t})/\mathbb{E}(\tilde{X}_{t}) $. Inequality (\ref{Thm2:Eq0}) now follows as
\begin{equation}
\frac{\mbox{var}(Y_{t})}{\mathbb{E}(Y_{t})} = \frac{\int^{t}_{0} \left(\frac{\mathbb{E}(\tilde{X}^{2}_{s})}{\mathbb{E}(\tilde{X}_{s})}\right)  \mathbb{E}(\tilde{X}_{s})  \psi(s) ds}{\int^{t}_{0} \mathbb{E}(\tilde{X}_{s})  \psi(s) ds}. \label{GPM:Eq0}
\end{equation}
Inequality (\ref{Thm2:Eq1}) is obtained by applying inequality (\ref{CSI}) to equation (\ref{GPM:Eq0}).
To prove inequality (\ref{Thm2:Eq2}), again apply inequality (\ref{CSI}) to the ratio 
\begin{equation}
\frac{\mathbb{E}(\tilde{X}^{2}_{s})}{\mathbb{E}(\tilde{X}_{s})}  =  \frac{\int^{\infty}_{0} \mathbb{E}(X_{s}^{2}) f_{\tilde{A}_{t}} (s) ds}{\int^{\infty}_{0} \mathbb{E}(X_{s}) f_{\tilde{A}_{t}} (s) ds}. \label{GPM:Eq1}
\end{equation}
This is possible since $ X_{s} \leq_{lr} X_{t} $ for all $ s \leq t $ by Assumption (A), and this implies that $ \mathbb{E}(X_{t}) $ and $  \mathbb{E}(X^{2}_{t})/\mathbb{E}(X_{t}) $ are both non-decreasing functions of $ t $  \cite[Theorem 1.C.20]{SS:07}. Therefore,
\begin{eqnarray}
\frac{\mathbb{E}(\tilde{X}^{2}_{s})}{\mathbb{E}(\tilde{X}_{s})} & \geq & \int^{\infty}_{0} \frac{\mathbb{E}(X_{s}^{2})}{\mathbb{E}(X_{s})} f_{\tilde{A}_{t}} (s) ds.  \label{GPM:Eq2}
\end{eqnarray}
Substituting the lower bound (\ref{GPM:Eq2}) into inequality (\ref{Thm2:Eq1}), we obtain
\begin{eqnarray*}
\frac{\mbox{var}(Y_{t})}{\mathbb{E}(Y_{t})} & \geq & \frac{1}{\Psi(t)} \int^{t}_{0} \left(\int^{\infty}_{0} \frac{\mathbb{E}(X_{r}^{2})}{\mathbb{E}(X_{r})} f_{\tilde{A}_{s}} (r) dr \right)  
 \psi(s) ds.
\end{eqnarray*}
Finally, interchanging the order of integration yields inequality (\ref{Thm2:Eq2}).

\subsection{Proof of Theorem \ref{Thm1}}

To prove Theorem \ref{Thm1}, we first give the following basic result.

\begin{lemma}\label{Lem2}
Assume that $ f: [0,\infty) \mapsto (0,\infty) $ and $ g: [0,\infty) \mapsto (0,\infty) $. Assume also that $ f $ is a non-decreasing function. Define
\begin{eqnarray*}
\mu_{t} & := & \int^{t}_{0} g(s) ds \\
\sigma^{2}_{t} & := & \int^{t}_{0} f(s) g(s) ds.
\end{eqnarray*}
Then $ \sigma^{2} $ may be expressed as a function of $ \mu $ and 
\begin{equation}
\frac{ d \log \sigma^{2}}{d \log \mu } = \frac{\mu}{\sigma^{2}} f \left( G^{-1}(\mu) \right), \label{Lem2:Eq1}
\end{equation}
where $ G^{-1} $ is the function such that $ G^{-1}(\mu_{t}) = t $. Furthermore, 
\begin{equation}
\frac{d \log \sigma^{2}}{d \log \mu} \geq 1,  \label{Lem2:Eq2}
\end{equation}
with equality if and only if for all $ s \in [0,G^{-1}(\mu)] $, $ f(s) = f(G^{-1}(\mu)) $.
\end{lemma}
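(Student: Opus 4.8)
The plan is to exploit that $ \mu_{t} $ is a smooth, strictly increasing bijection, so that $ \sigma^{2} $ genuinely becomes a function of $ \mu $, then compute the logarithmic derivative by the inverse-function/chain rule, and finally reduce the inequality to an elementary monotonicity comparison. First I would observe that, since $ g > 0 $, the map $ t \mapsto \mu_{t} = \int^{t}_{0} g(s)\,ds $ is continuous and strictly increasing, hence admits a continuous inverse $ G^{-1} $ with $ G^{-1}(\mu_{t}) = t $. Writing $ \sigma^{2} = \sigma^{2}_{G^{-1}(\mu)} $ then makes precise the claim that $ \sigma^{2} $ is a function of $ \mu $.

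For equation (\ref{Lem2:Eq1}) I would differentiate. By the fundamental theorem of calculus, $ d\mu_{t}/dt = g(t) $ and $ d\sigma^{2}_{t}/dt = f(t) g(t) $, so dividing gives $ d\sigma^{2}/d\mu = f(t) = f(G^{-1}(\mu)) $. Converting to logarithmic derivatives yields $ \frac{d\log\sigma^{2}}{d\log\mu} = \frac{\mu}{\sigma^{2}} \frac{d\sigma^{2}}{d\mu} = \frac{\mu}{\sigma^{2}} f(G^{-1}(\mu)) $, which is exactly (\ref{Lem2:Eq1}). (Since $ f $ is monotone it is continuous off a countable set, so these relations hold wherever needed.)

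For the inequality (\ref{Lem2:Eq2}), by (\ref{Lem2:Eq1}) it suffices to show $ f(t)\mu_{t} \geq \sigma^{2}_{t} $, that is,
\[
f(t)\int^{t}_{0} g(s)\,ds \geq \int^{t}_{0} f(s) g(s)\,ds.
\]
This is immediate from $ f $ being non-decreasing: for every $ s \in [0,t] $ one has $ f(s) \leq f(t) $, and multiplying by $ g(s) > 0 $ and integrating gives the claim. Equivalently, the difference equals $ \int^{t}_{0} (f(t) - f(s)) g(s)\,ds \geq 0 $.

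The only point requiring care, and the main (if modest) obstacle, is the equality characterization. Equality in (\ref{Lem2:Eq2}) is equivalent to $ \int^{t}_{0} (f(t) - f(s)) g(s)\,ds = 0 $, an integral of a nonnegative integrand (nonnegative because $ f(t) \geq f(s) $ and $ g > 0 $). The plan here is a contradiction argument: if $ f(s_{0}) < f(t) $ for some $ s_{0} < t $, then monotonicity gives $ f(s) \leq f(s_{0}) $ for all $ s \leq s_{0} $, so the integral is bounded below by $ (f(t) - f(s_{0})) \int^{s_{0}}_{0} g(s)\,ds > 0 $, since $ \int^{s_{0}}_{0} g > 0 $. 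This contradiction forces $ f(s) = f(t) $ for all $ s \in [0,t] $, i.e. $ f(s) = f(G^{-1}(\mu)) $ for all $ s \in [0, G^{-1}(\mu)] $; the converse implication is immediate by direct substitution.
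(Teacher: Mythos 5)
Your proof is correct and follows essentially the same route as the paper's: invert the strictly increasing map $ t \mapsto \mu_{t} $, obtain (\ref{Lem2:Eq1}) by the chain rule, and deduce (\ref{Lem2:Eq2}) from the monotonicity of $ f $. The only cosmetic differences are that the paper invokes the integral mean value theorem to write $ \sigma^{2}/\mu = f(s^{\ast}) $ where you integrate the pointwise bound $ f(s) \leq f(t) $ directly, and your contradiction argument for the equality case spells out a step the paper merely asserts.
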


\begin{proof}
As $ \mu_{t} $ is strictly increasing in $ t $, there exists a function $ G^{-1}:[0,\infty) \mapsto [0,\infty) $ such that $ G^{-1}(\mu_{t}) = t $. Hence, we may write $ \sigma^{2}(\mu) = \sigma^{2}_{G^{-1}(\mu)} $.  Equation (\ref{Lem2:Eq1}) follows from the chain rule. From the mean value theorem, there exists an $ s^{\ast} \in [0,G^{-1}(\mu)] $ such that $ \sigma^{2} / \mu = f(s^{\ast}) $. Inequality (\ref{Lem2:Eq2}) now follows as $ f $ is non-decreasing. For the equality in (\ref{Lem2:Eq2}) to hold, we must have $ f(G^{-1}(\mu)) \int^{G^{-1}(\mu)}_{0} g(s) ds = \int^{G^{-1}(\mu)}_{0} f(s) g(s) ds $. As $ f $ is assumed to be non-decreasing, this is only possible if $ f $ is constant on $ [0,G^{-1}(\mu)] $.
\end{proof}

Returning to the proof of Theorem \ref{Thm1}, we note that $  \mathbb{E}(\tilde{X}^{2}_{t})/\mathbb{E}(\tilde{X}_{t}) $ is a non-decreasing function of $ t $ (see the proof of Theorem \ref{Thm2}).  Let $ g(s) = \mathbb{E}(\tilde{X}_{s}) \psi(s) $ and $ f(s) = \mathbb{E}(\tilde{X}_{s}^{2}) / \mathbb{E}(\tilde{X}_{s}) $. Inequality (\ref{Thm1:Eq1}) now follows from Lemma \ref{Lem2}. Also from Lemma \ref{Lem2}, we see that equality in (\ref{Thm1:Eq1}) holds if and only if $ \mathbb{E}(\tilde{X}_{s}^{2}) / \mathbb{E}(\tilde{X}_{s}) = \mathbb{E}(\tilde{X}_{t}^{2}) / \mathbb{E}(\tilde{X}_{t})  $ for all $ s \in [0,t] $. This is only possible if $ \tilde{X}^{2}_{s} \tilde{X}_{t} \stackrel{d}{=} \tilde{X}_{s}\tilde{X}^{2}_{t} $ for all $ s \in [0,t] $  \cite[Theorem 1.C.20]{SS:07}. Let $ m $ be a square-free positive integer. Then 
\[
\mathbb{P}(\tilde{X}^{2}_{s} \tilde{X}_{t} = m ) = \mathbb{P} (\tilde{X}_{s} = 1) \mathbb{P}(\tilde{X}_{t} = m ).
\]
As $ \tilde{X}^{2}_{s} \tilde{X}_{t} \stackrel{d}{=} \tilde{X}_{s}\tilde{X}^{2}_{t} $, 
\begin{equation}
\frac{\mathbb{P}(\tilde{X}_{t} = m )}{\mathbb{P}(\tilde{X}_{s} = m )} = \frac{\mathbb{P} (\tilde{X}_{t} = 1)}{\mathbb{P} (\tilde{X}_{s} = 1)}, \label{Thm1:Proof:Eq1}
\end{equation}
for all square-free integers $ m \geq 1$ in the union of the supports of $ \tilde{X}_{s} $ and $ \tilde{X}_{t} $.  Since $ \tilde{X}_{s} \leq_{lr} \tilde{X}_{t} $, equation (\ref{Thm1:Proof:Eq1}) must hold for all $ m \geq 1$ in the union of the supports of $ \tilde{X}_{s} $ and $ \tilde{X}_{t} $. From equation (\ref{Thm1:Proof:Eq1}), 
\[ 
\mathbb{P}(\tilde{X}_{s} > 0) = \frac{\mathbb{P} (\tilde{X}_{t} = 1)}{\mathbb{P} (\tilde{X}_{s} = 1)} \mathbb{P}(\tilde{X}_{t} > 0).
\]
Therefore,  $ \mathbb{P}(\tilde{X}_{t} = m \mid \tilde{X}_{t} > 0) = \mathbb{P}(\tilde{X}_{s} = m \mid \tilde{X}_{s} > 0 ) $ for all $ m \geq 1 $.

\subsection{Proof of Corollary \ref{Cor1}}
As $ X_{t} $ is a Poisson process, $ \tilde{X}_{t} $ has a mixed Poisson distribution. Let $ P_{t}(z) $ be the probability generating function of $ \tilde{X}_{t} $. From equation (\ref{Thm1:Proof:Eq1}) there exists a $ c_{s,t} \in [0,1] $ such that 
\begin{equation}
P_{s}(z) = c_{s,t} P_{t}(z) + (1-c_{s,t}), \label{Cor:Eq1}
\end{equation}
for all $ z $. It is known that the mixing distribution of a mixed Poisson distribution is uniquely identifiable \cite{Feller:43}. Therefore, equation (\ref{Cor:Eq1}) implies that $ \tilde{A}_{s} \stackrel{d}{=} Z \tilde{A}_{t} $, where $ Z $ is an independent Bernoulli random variable with $ \mathbb{P}(Z=1) = c_{s,t} $. From equation (\ref{Dist:Eq1}), this is only possible if $ c_{s,t} = 1 $. Hence, the ratio (\ref{StochOrder1}) is constant.

\subsection{Proof of Theorem \ref{Thm3}}

As $ \tilde{A}_{1,t} \leq_{lr} \tilde{A}_{2,t} $, $ \mathbb{E} (\tilde{A}_{1,t}) \leq \mathbb{E} (\tilde{A}_{2,t}) $. Hence, equation (\ref{FR:Eq1}) implies  $ \psi_{1}(t) \geq \psi_{2}(t) $. As $ \mathbb{E}(\tilde{X}_{i,t}) = \mathbb{E}(\tilde{A}_{i,t}) $ for all $ t\geq 0 $ and $ i=1,2 $, equations (\ref{PS:Eq1}) and (\ref{FR:Eq1}) imply $ \mathbb{E}(Y_{1,t}) = \mathbb{E}(Y_{2,t}) $ for all $ t \geq 0$, that is, the expected parasite burden of the two predators are equal for all ages. As $ \tilde{A}_{1,t} \leq_{lr} \tilde{A}_{2,t} $, $ \tilde{X}_{1,t} \leq_{lr} \tilde{X}_{2,t} $ for all $ t\geq 0 $ \cite[Theorem 1.C.17]{SS:07}. It follows from  \cite[Theorem 1.C.20]{SS:07} that
\[
\frac{\mathbb{E}(\tilde{X}_{1,t}^{2})}{\mathbb{E}(\tilde{X}_{1,t})} \leq \frac{\mathbb{E}(\tilde{X}_{2,t}^{2})}{\mathbb{E}(\tilde{X}_{2,t})},
\]
for all $ t \geq 0 $. From equation (\ref{PS:Eq2}), the variance of the parasite burden of the second predator will be greater than that of the first, and, as the expected parasite burdens are equal, the parasite burden of the second predator will also have a greater index of dispersion.

\bibliographystyle{vancouver}
\bibliography{BL}

\newpage 

\begin{figure}
\includegraphics[width=7.5cm]{}
\caption{Plot of the means and variances of \emph{Corynosoma} cystacanths in 43 fish samples from 15 publications. The line of best fit plotted is $ \log_{10}(\sigma^2) = 1.7484 \log_{10}(\mu) + 0.5609 $.} \label{Fig0}
\end{figure}

\begin{figure}
\includegraphics[width=7.5cm]{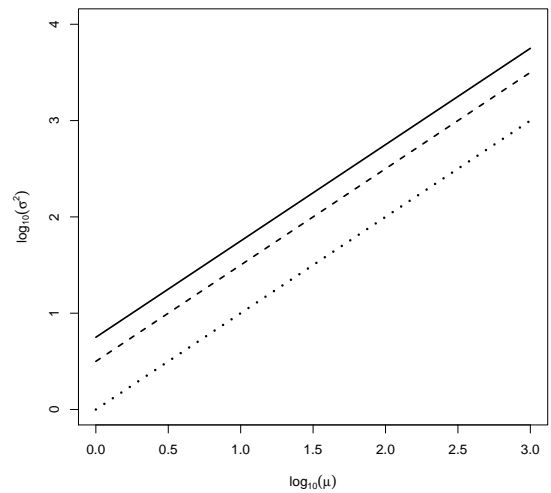}
\caption{Plot of log variance against log mean of the parasite burden. The parasite burden of prey (dotted line) follows a Poisson process. The first predator (dashed line) consumes the prey, and second predator (solid line) consumes the first predator.}\label{FigFPS}
\end{figure}

\begin{figure}
\includegraphics[width=7.5cm]{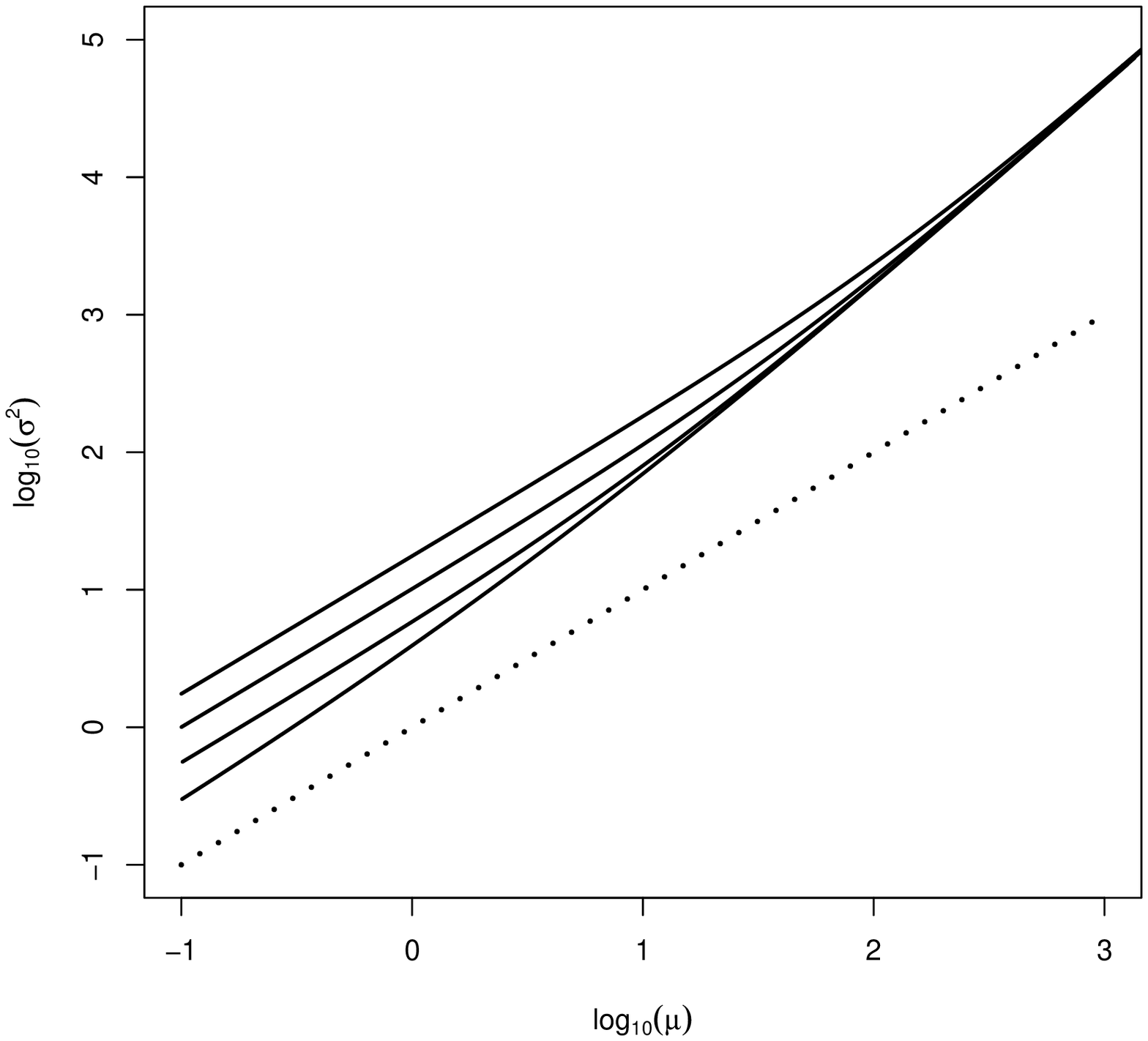}
\includegraphics[width=7.5cm]{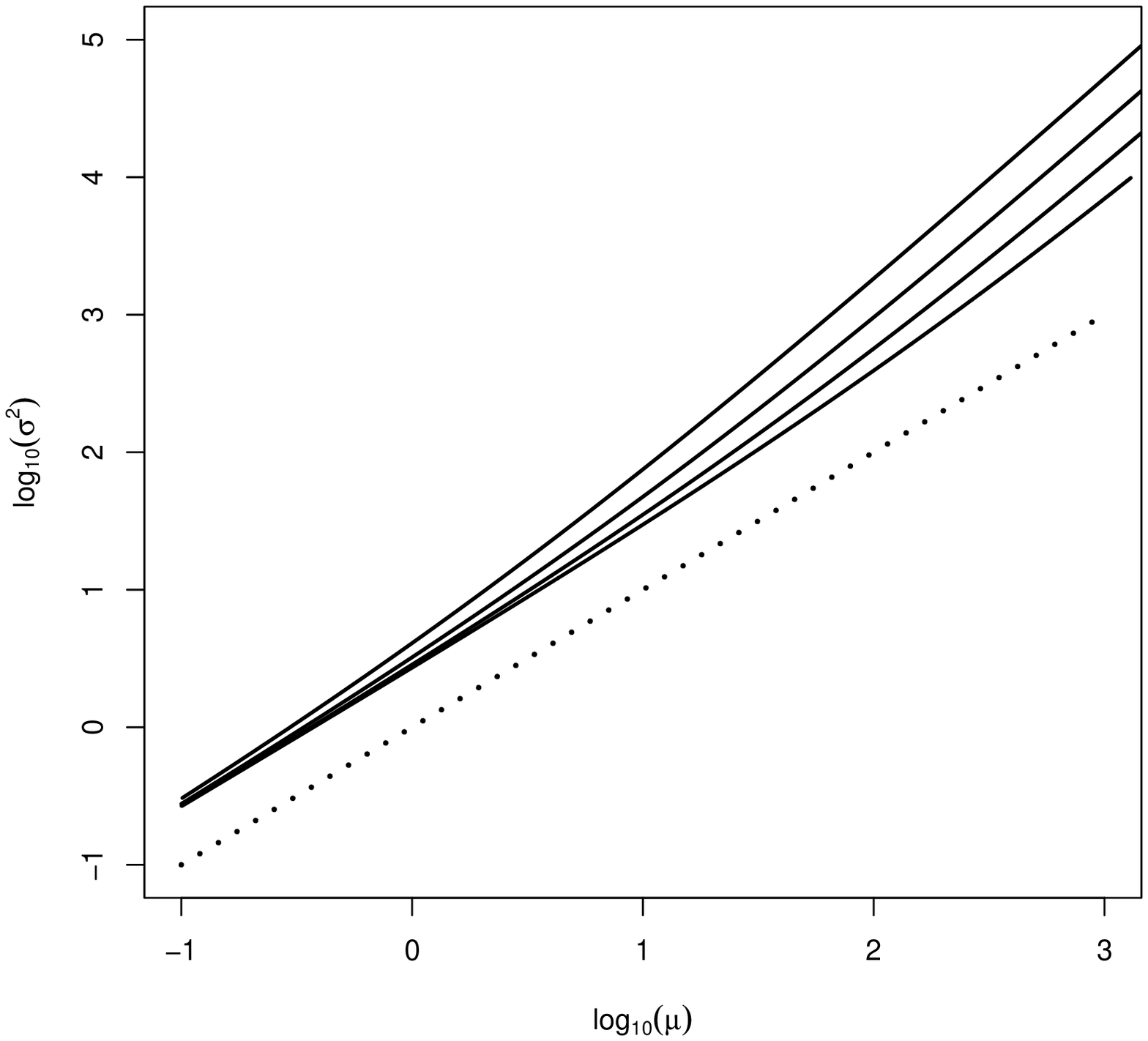}
\caption{Change in variance with changes in predator age: (Left) Plot of $ \log_{10}(\sigma^{2}) $ against $ \log_{10}(\mu) $ for $ \delta = 1$ and $ \alpha = (0.1,2,5,10) $.  The curves increase with $ \alpha $, but become less steep over this range of $ \mu $. (Right)  Plot of $ \log(\sigma^{2}) $ against $ \log(\mu) $ for $ \alpha = 0.1$ and $ \delta = (0.01,0.05,0.25,1.25) $. The curves increase with $ \delta $. For both plots, the dotted line corresponds to $ \log_{10}(\sigma^2) = \log_{10}(\mu) $.  }\label{Fig1}
\end{figure}

\begin{figure}
\includegraphics[width=7.5cm]{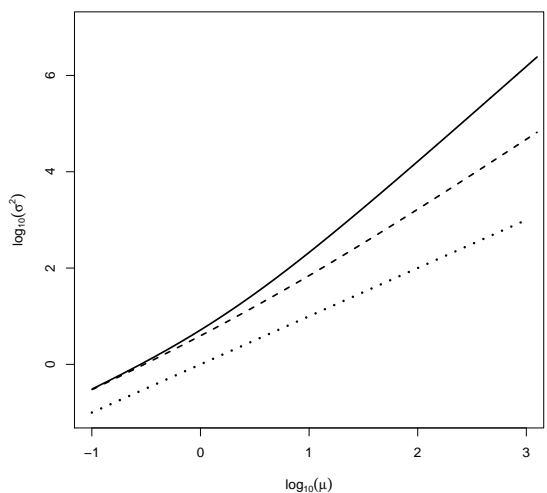}
\caption{Plot of $ \log_{10}(\sigma^{2}) $ against $ \log_{10}(\mu) $ with  $ \mathbb{E}(\lambda) =1,\ \mathbb{E}(\lambda^{2})=1.5,\ \beta+\gamma = 1$ and $ \alpha = 0.1 $ for both predators. For the first predator (dashed line) $ \delta_{1}(t) = t $ and $ \psi_{1}(t) = 1 $, and for the second predator (solid line) $ \delta_{2}(t) = t + t^{2} $ and $ \psi_{2}(t) = (0.1 + t)/(0.1 + t + t^{2}) $.  The dotted line corresponds to $ \log_{10}(\sigma^2) = \log_{10}(\mu) $. } \label{Fig2}
\end{figure}

\newpage

\begin{table}
\caption{Index of dispersion at a mean of 10 parasites per host ($ I_{10}$) for 28 parasite groups, from teleosts unless otherwise stated. Values for slope and intercept taken from log variance/log mean graph.}
\begin{tabular}{lrrlllr}
Parasite group	& No. of & No. of & slope & intercept  & $R^2 $ & $I_{10}$ \\
& papers & data & & &  & \\ 
& & points & & & & \\
\hline						
Adult cestodes in elasmobranch &	1 & 7 & 2.4 & 0.41  & 0.89 & 64.6 \\
Juvenile \emph{Contracaecum} spp. & 15 & 24 & 1.96 & 0.67  & 0.83 & 42.7 \\
Adult Digenea excl. hemiurids	& 39 & 122 & 2.07 & 0.56 & 0.87 & 42.7 \\
Adult Digenea from fish in birds \& otters & 6 & 29 & 1.43 & 1.17 & 0.73 & 39.8 \\
Diplectanids & 5 & 29 & 1.82 & 0.64 & 0.79  & 28.8 \\
Non-diplostome metacercariae & 25 & 86 & 1.71 & 0.75 & 0.88  & 28.8 \\
Diplostome metacercariae & 21 & 69 & 1.62 & 0.83 & 0.72  & 28.2\\
Adult anisakids & 6	& 12 & 1.78 & 0.65 & 0.82  & 26.9 \\
Adult spirurids	& 22 & 37 & 1.52 & 0.88 & 0.69 & 25.7 \\
Juvenile \emph{Hysterothylacium} & 14 & 28 & 2.02 & 0.39 & 0.93 & 25.7 \\
Adult \emph{Corynosoma} in seals & 1 & 5 & 2.03 & 0.35 & 0.94 & 24.0 \\
Adult Acanthocephala & 32 & 108 & 1.75 & 0.62 & 0.88 & 23.4 \\
\emph{Grillotia} blastocysts & 13 & 37 & 1.66 & 0.71 & 0.95  & 23.4\\
Ancyrocephalids & 5 & 13 & 1.75	& 0.62 & 0.92 & 23.4 \\
Adult hemiurids & 20 & 62 & 1.8 & 0.56 & 0.85 & 22.9 \\
Tetraphyllidean metacestodes & 9 & 17 & 1.43 & 0.91 & 0.75 & 21.9 \\
Adult cestodes	 & 24 & 64 & 1.71 & 0.63 & 0.84  & 21.9 \\
\emph{Corynosoma} cystacanths & 15 & 43 & 1.75 & 0.56 & 0.91  & 20.4 \\
Juvenile \emph{Anisakis simplex} & 6	& 24 & 1.74 & 0.5 & 0.91  & 17.4 \\
Diphyllobothriid plerocercoids	& 7 & 27 &  1.61 & 0.62 & 0.88 & 17.0 \\
Adult ascaridids & 11 & 17 & 1.92 & 0.25  & 0.81 & 14.8 \\
Non-caligoid copepods & 14 & 55 & 1.73 & 0.43 & 0.92 & 14.5 \\
Polyopisthocotyleans & 11 & 37 & 1.45 & 0.59 & 0.87 & 11.0 \\
Dactylogyrids	& 13 & 77 & 1.55 & 0.45 & 0.9  & 10.0\\
Caligoids	& 11 & 45 & 1.5 & 0.5 & 0.71  & 10.0 \\
Capsalids	& 4 & 17 & 1.55 & 0.39 & 0.83  & 8.7 \\
Cystacanths in inverts (extrapolated) & 1 & 6 & 1.15 & 0.28 & 0.92 & 2.7\\
Hemiurids in inverts (extrapolated) & 3 & 11 & 1.14 & 0.27 & 0.85 & 2.6 \\
\hline
Total & 354 & 1108
\end{tabular}		
\label{Tab1}		
\end{table}

\end{document}